\documentclass[onefignum,onetabnum]{siamonline220329}

\usepackage{braket,amsfonts}

\usepackage{enumitem} 
\usepackage{array}


\usepackage{pgfplots}
\pgfplotsset{compat=1.17} 

\newsiamthm{claim}{Claim}
\newsiamremark{remark}{Remark}
\newsiamremark{hypothesis}{Hypothesis}
\crefname{hypothesis}{Hypothesis}{Hypotheses}

\newtheorem{prop}{Proposition}[section]

\newtheorem{asmp}{Assumption}



\usepackage{graphicx,epstopdf}

\Crefname{ALC@unique}{Line}{Lines}

\usepackage{amsopn}

\usepackage{xspace}
\usepackage{bold-extra}
\usepackage[most]{tcolorbox}

\colorlet{texcscolor}{blue!50!black}
\colorlet{texemcolor}{red!70!black}
\colorlet{texpreamble}{red!70!black}
\colorlet{codebackground}{black!25!white!25}


\lstdefinestyle{siamlatex}{%
  style=tcblatex,
  texcsstyle=*\color{texcscolor},
  texcsstyle=[2]\color{texemcolor},
  keywordstyle=[2]\color{texemcolor},
  moretexcs={cref,Cref,maketitle,mathcal,text,headers,email,url},
}

\tcbset{%
  colframe=black!75!white!75,
  coltitle=white,
  colback=codebackground, 
  colbacklower=white, 
  fonttitle=\bfseries,
  arc=0pt,outer arc=0pt,
  top=1pt,bottom=1pt,left=1mm,right=1mm,middle=1mm,boxsep=1mm,
  leftrule=0.3mm,rightrule=0.3mm,toprule=0.3mm,bottomrule=0.3mm,
  listing options={style=siamlatex}
}

\newtcblisting[use counter=example]{example}[2][]{%
  title={Example~\thetcbcounter: #2},#1}

\newtcbinputlisting[use counter=example]{\examplefile}[3][]{%
  title={Example~\thetcbcounter: #2},listing file={#3},#1}

\DeclareTotalTCBox{\code}{ v O{} }
{ 
  fontupper=\ttfamily\color{black},
  nobeforeafter,
  tcbox raise base,
  colback=codebackground,colframe=white,
  top=0pt,bottom=0pt,left=0mm,right=0mm,
  leftrule=0pt,rightrule=0pt,toprule=0mm,bottomrule=0mm,
  boxsep=0.5mm,
  #2}{#1}

\patchcmd\newpage{\vfil}{}{}{}
\flushbottom


\begin{tcbverbatimwrite}{tmp_\jobname_header.tex}

\title{Finding the nonnegative minimal solutions of Cauchy PDEs in a volatility-stabilized market \thanks{Submitted to the editors DATE.
\funding{TI's work is supported in part by National Science Foundation grants DMS-1615229 and DMS-2008427. NTY's work is supported in part by National Science Foundation grant DMS-2038118.}}}
\author{Nicole Tianjiao Yang\thanks{Department of Mathematics, Emory University, Atlanta, GA 30322, USA (E-mail: \href{mailto:yang@pstat.ucsb.edu}{tyang31@emory.edu})} \and Tomoyuki Ichiba\thanks{Department of Statistics and Applied Probability, South Hall, University of California, Santa Barbara, CA 93106, USA (E-mail: \href{mailto:ichiba@pstat.ucsb.edu}{ichiba@pstat.ucsb.edu}) }}


\headers{Minimal solutions to Cauchy PDEs in a volatility-stabilized market}{Nicole Yang, Tomoyuki Ichiba}
\end{tcbverbatimwrite}
\input{tmp_\jobname_header.tex}



\begin{document}
\maketitle

\begin{tcbverbatimwrite}{tmp_\jobname_abstract.tex}

\begin{abstract}

The strong relative arbitrage problem in Stochastic Portfolio Theory seeks an investment strategy that almost surely outperforms a benchmark portfolio at the end of a given time horizon. The highest relative return in relative arbitrage opportunities is characterized by the smallest nonnegative continuous solution of a Cauchy problem for a partial differential equation (PDE) \cite{fernholz2010optimal}. However, solving this type of PDE poses analytical and numerical challenges, due to the high dimensionality and multiple solutions of the PDEs. In this paper, we develop a numerical method to solve the relative arbitrage problem and the associated PDE in a volatility-stabilized market, by formulating the capitalization as time-changed Bessel bridges. We present a practical algorithm and demonstrate numerical results for optimal arbitrage opportunities in a volatility-stabilized market.
\end{abstract}

\begin{keywords}
Optimal arbitrage, Cauchy PDEs, Bessel bridges, Stochastic Portfolio Theory
\end{keywords}

\end{tcbverbatimwrite}
\input{tmp_\jobname_abstract.tex}

\section{Introduction}

\label{sec: intro}

Stochastic Portfolio Theory (SPT), introduced by Robert Fernholz \cite{fernholz2002stochastic}, provides a framework for analyzing the behavior of the portfolio and the structure of the equity markets. Unlike models in traditional portfolio theory, SPT does not rely on the existence of an equivalent (local) martingale measure, thereby allowing for the possibility of relative arbitrage opportunities. A key concept within SPT is relative arbitrage, which involves the construction of investment strategies that almost surely outperform a given benchmark portfolio over a specified time horizon, with a focus on maximizing relative returns. The optimization of relative arbitrage strategies with respect to the market portfolio is derived in \cite{fernholz2010optimal} with respect to the market portfolio. Specific market models, such as the stabilized volatility model, which demonstrate the existence of relative arbitrage opportunities, are introduced in \cite{fernholz2005relative}. Further developments in \cite{ichiba2020relative} and \cite{yang2023relative} extend the theory to account for investor interactions, where individual investors influence the dynamics of stock capitalization, and the relative arbitrage problem becomes a stochastic differential game.

The tractability of single-player or multi-player relative arbitrage models remains a challenge. 
Even in the simplified case of a single investor in a standard time-homogeneous market system, obtaining analytical solutions is notoriously difficult due to the high dimensionality introduced by the number of assets in the market. Traditional ways of solving PDEs often rely on evolution of operators along spatiotemporal grids. This poses expensive computational costs, especially for high-dimensional PDEs, where the so-called "curse of dimensionality" causes memory requirements and computational complexity to grow exponentially with the number of dimensions. Furthermore, the optimal arbitrage is modeled as the non-negative minimal solution of an associated Cauchy PDE in \cite{fernholz2010optimal}, \cite{fernholz2011optimal}, \cite{ichiba2020relative}. PDE problems with multiple solutions remain a challenge in both numerical methods and deep learning approaches \cite{di2020finding}. However, it is of great practical need to tackle these PDE problems that may have multiple solutions or to model problems as differential inequalities. There are multiple solutions in some examples of the Black-Scholes model with stochastic volatility \cite{ekstrom2010black} or pricing under financial bubbles \cite{heston2007options}. BSDEs with singular terminal conditions and their minimal nonnegative solutions are discussed in \cite{ankirchner2014bsdes}.

In this paper, we investigate the strong relative arbitrage problem in volatility-stabilized markets and derive a numerical scheme to solve its optimal arbitrage opportunities. In particular, optimal arbitrage opportunities are characterized as the minimal nonnegative continuous solution of a Cauchy PDE. We investigate the solution through a probabilistic characterization, and the market capitalization is modeled as a time-changed Bessel processes. We ensure the positivity of the capitalization processes by analyzing the logarithmic dynamics and develop an efficient interpolation scheme based on Bessel bridges to mitigate numerical instabilities. We present numerical results of the optimal arbitrage solution for a given number of stocks and demonstrate numerically the existence of nontrivial arbitrage opportunities that outperform the market portfolio.

\subsection*{Organization of the paper}
We present the numerical solution for optimal arbitrage opportunities (\cite{fernholz2010optimal}) relative to the market portfolio in the volatility-stabilized market model. We summarize the relative arbitrage and the probabilistic solution for optimal arbitrage in Section~\ref{sec:recall}. We introduce the numerical methods based on time-changed Bessel processes and Bessel bridge interpolation in Section~\ref{sec: numeric}. In addition, we provide numerical results that demonstrate the existence of relative arbitrage opportunities and approximations of the optimal arbitrage solution. Then, in Section~\ref{sec: rbsde}, we propose general grid-based methods for optimal arbitrage, considering the associated backward stochastic differential equation of the Cauchy PDE for optimal arbitrage.
 
\section{Relative arbitrage}

For a given finite time horizon $[0,T]$, we consider an admissible market model $\mathcal{M}$ consisting of a given standard Brownian motion $W(\cdot) := (W_1(\cdot), \ldots, W_n(\cdot))^{\prime}$ on the probability space $(\Omega, \mathcal{F}, \mathbb{P})$. 
Here, filtration $\mathbb{F}$ represents the ``flow of information'' in the market driven by the Brownian motion $W$, that is, 
$\mathbb{F} = \{\mathcal{F}^W(t)\}_{0 \leq t < \infty}$
and $\mathcal{F}^W(t) := \{\sigma(W(s)) ; 0 < s < t\}_{0 \leq t < \infty} $ with $\mathcal{F}^W(0) := \{\emptyset, \Omega\}$, mod $\mathbb{P}$. All local martingales and supermartingales are with respect to the filtration $\mathbb{F}$ if not written specifically.

\subsection{Market model}

We consider the market $\mathcal{M}$ contains $n$ risky assets (stocks) with capitalization $\mathcal{X}(\cdot) := (X_1(\cdot), \ldots, X_n(\cdot))^{\prime}$ driven by $n$ independent Brownian motions. Here, $^\prime$ stands for the transpose of matrices. We assume that the capitalization follows the system of stochastic differential equations below. For $t \in [0,T]$, 
\begin{equation}
\label{eq: x}
dX_i(t) = X_i(t)\big(\beta_i(\mathcal{X}(t)) dt + \sum_{k=1}^n \sigma_{ik}(\mathcal{X}(t)) dW_k(t)\big),\ \  i= 1, \ldots, n,
\end{equation}
with initial condition $X_i(0) = x_i$, $i = 1, \ldots , n$. We assume that the number of independent Brownian motions is $n$, 
that is, we have exactly as many randomness sources as there are stocks on the market $\mathcal{M}$. The dimension $n$ is chosen to be large enough to avoid unnecessary dependencies among the stocks we define. Here, $\beta(\cdot) = (\beta_1(\cdot), \ldots, \beta_n(\cdot))' : \mathbb{R}_+^n \times \mathbb{R}_+^n \rightarrow \mathbb{R}^n$ as the mean rates of return for $n$ stocks and $\sigma(\cdot) = (\sigma_{ik}(\cdot))_{i, k = 1, \ldots, n} : \mathbb{R}_+^n \rightarrow \text{GL}(n)$ as volatilities are assumed to be invertible, $\mathbb{F}$-progressively measurable in which $\text{GL}(n)$ is the space of $n \times n$ invertible real matrices. 
To satisfy the integrability condition, we assume that for any $T > 0$,
\begin{equation}
\label{xcond}
 \sum_{i=1}^n \int_0^T \bigg (|\beta_i(\mathcal{X}(t))|+ \alpha_{ii}(\mathcal{X}(t)) \bigg)dt < \infty, 
\end{equation}
where $\alpha(\cdot) := \sigma(\cdot)\sigma'(\cdot)$, and its $i,j$ element $\alpha_{i,j}(\cdot)$ is the covariance process between the logarithms of $X_i$ and $X_{j}$ for $1\le i,j \le n$. 

\begin{definition}[Investment strategy and wealth]
\label{portfoliopi}
An $\mathbb{F}$-progressively measurable process $\pi(\cdot) = (\pi_1(\cdot) , \ldots, \pi_n(\cdot) )'$ 
is called an {admissible} investment strategy if
\begin{equation}
\label{admpi}
\int_0^T (| (\pi(t))^{\prime}\beta(\mathcal{X}(t))| + (\pi(t))^{\prime}\alpha(\mathcal{X}(t))\pi(t) ) dt < \infty, \quad  \,  T\in (0,\infty),\,\, \text{a.e. } 
\end{equation}
We denote the admissible set of the investment strategy process of an investor by $\mathbb{A}$. If the admissible strategy $\pi$
takes values in the set
$$\Delta_n := \{\pi = (\pi_1,...,\pi_n)\in \mathbb{R}^n \,| \pi_1 + \ldots +\pi_n = 1\},$$
then $\pi(\cdot)$ is
a portfolio. 

An investor adopts $\pi(\cdot)\mathbb{A}$ and the wealth process $V(\cdot) = V^{\pi, v_0}(\cdot) $ of the investor follows
\begin{equation}
\label{wealth}
    \frac{dV(t)}{V(t)} = \sum_{i=1}^n \pi_i(t) \frac{dX_i(t)}{X_i(t)}, \quad  V(0) = v_{0}.
\end{equation}
\end{definition}

In particular, the {\it market portfolio} $\pi^{\mathbf{m}}$ is characterized by investing in proportion to the market weight of each stock,
\begin{equation}
\label{Mportfolio}
\pi^{\mathbf{m}}_i(t) := \frac{X_i(t)}{X(t)}, \quad i=1, \ldots, n, \quad t \ge 0,
\end{equation}
where $
X(t) = X_1(t) + \ldots + X_n(t)$, for $t \in (0,T]$, and $X(0) = x_1 + \cdots + x_n $. 
The instantaneous growth of the wealth process $V^{\mathbf{m}}(\cdot)$ amounts to the instantaneous growth of the total capitalization of the market,
\begin{equation}
\label{marketweal}
    \frac{d V^{\mathbf{m}}(t)}{V^{\mathbf{m}}(t)} = \sum_{i=1}^n \pi^{\mathbf{m}}_i(t) \cdot \frac{dX_i(t)}{X_i(t)} = \frac{d X(t)}{X(t)} , \quad t > 0, \quad V^{\mathbf{m}}(0) = v_0.
\end{equation}

\subsection{The optimization of relative arbitrage opportunities}
\label{sec:recall}
In Stochastic Portfolio Theory, the performance of a portfolio is measured with respect to a certain benchmark. 
Given an investment strategy $\pi(\cdot)$ and a benchmark strategy $\rho(\cdot)$, with the same initial capital $V^{\pi}(0) = V^{\rho}(0) =1$, we shall say that $\pi(\cdot)$ represents an arbitrage opportunity relative to $\rho(\cdot)$ over the time horizon $[0,T]$, with a given $T>0$, if
$$
\mathbb{P} \big( V^{\pi}(T) \geq V^{\rho}(T) \big) = 1 \quad \text{and} \quad \mathbb{P} \big(  V^{\pi}(T) > V^{\rho}(T) \big) >0. 
$$

In this paper, we are interested in the portfolio optimization that achieves arbitrage opportunities relative to the market portfolio.
\begin{definition}[Optimal arbitrage]
\label{uTdef}
In the market system \eqref{eq: x} with given initial market capitalization $\mathcal{X}(0)$, the investor pursues the optimal arbitrage characterized by the smallest initial relative wealth
\begin{equation}
\label{u0def}
u(T) = \inf \bigg \{ w \in (0, \infty) \,  \Big \vert \, \text{ there exists } 
\pi(\cdot) \in \mathbb{A} \text{ such that } \,v = w X(0) , \, \,  V^{\pi, v}(T) \geq X(T)\bigg \} 
\end{equation}
and the relative arbitrage portfolio $\{\pi(t)\}_{0 \leq t \leq T}$ that achieves such smallest initial relative wealth $u(T)$. 
\end{definition}
More precisely, we define a sequence of subproblems $u(T-t, \cdot)$, which represents the initial optimal arbitrage quantity to start at $t \in [0,T)$ so that we match or exceed the benchmark portfolio at the terminal time $T$ of the investment horizon. For $ 0 \le t \le T$, it follows that
\begin{equation}
\label{utobj1}
\begin{aligned}
u(T-t, \mathcal{X}(t)) = \inf \bigg \{ \omega \in (0, \infty) \,  \Big \vert \, \text{ there exists } & 
\widetilde{\pi}(\cdot) \in \mathbb{A} \text{ such that } \\
&\,v = \omega \widetilde{X}(t) , \, \,  \widetilde{V}^{\widetilde{\pi}, v}(T) \geq \widetilde{X}(T) \bigg \}. 
\end{aligned}
\end{equation}
At every time $t$, the investor
optimizes $\, \widetilde{\pi}(\cdot)\, $ from $t$ to $T$, to obtain the optimal quantity as defined in \eqref{utobj1}. 

\begin{asmp}
\label{asmp1:main}
Let $X_i(t) \beta_i(t) =: b_i(\mathcal{X}(t))$, 
$X_i(t) \sigma_{ik} (t) =: s_{ik}(\mathcal{X}(t))$, $\sum_{k=1}^n s_{ik}(t)s_{jk}(t) =: a_{ij}(\mathcal{X}(t))$ for $t \ge 0 $.
We assume that the market price of the risk process $\theta: [0, \infty) \times \mathbb{R}^n \times \mathbb{R}^n \rightarrow \mathbb{R}^n$ exists and is square-integrable. That is, there exists an $\mathbb{F}$-progressively measurable process such that for any $t \in [0, \infty)$,
\begin{equation}
\label{assum}
\sigma(\mathcal{X}(t)) \theta(\mathcal{X}(t)) = \beta(\mathcal{X}(t)), \ \ \mathbb{P} \bigg( \int_0^T ||\theta(\mathcal{X}(t))||^2 dt < \infty, \forall T \in (0, \infty) \bigg) = 1.
\end{equation}
\end{asmp}

We define the deflator based on the market price of the risk process $L(t)$ as
\begin{equation}
    \label{ltmg}
dL(t) = - \theta(t) L(t) dW_t, \quad t \ge 0, \quad L(0) = 1. 
\end{equation}
The market is endowed with the existence of a local martingale $L(\cdot) $ with $\mathbb{E}[L(T)] \leq 1$ under Assumption~\ref{asmp1:main}.
It has shown in \cite{fernholz2010optimal} that with Markovian market coefficients, $u(T)$ in \eqref{u0def} can be represented as $u(T, \mathcal X(0))$ from  
\begin{equation}
\label{conduu0}
\begin{aligned}
u(T-t, \mathcal{X}(t)) = \frac{\mathbb{E}[L(T)X(T) | \mathcal{F}_t]}{L(t)X(t)}, \quad 0 \le t \le T.
\end{aligned}
\end{equation}

\section{Numerical solution to optimal arbitrage}
\label{sec: numeric}

In \cite{fernholz2010optimal}, it is shown that the optimal quantity $u(\cdot)$ in \eqref{conduu0} is
the minimal non-negative continuous solution $u \in C^{1,2}([0,T] \times \mathbb{R}^n)$ of the semi-linear parabolic Cauchy problem,
\begin{equation}
\label{cauchypde}
    \frac{\partial u}{\partial \tau}(\tau, \mathbf{x}) = \frac{1}{2} \sum_{i=1}^n \sum_{j=1}^n a_{ij}(\mathbf{x}) D_{ij}^2 u(\tau, \mathbf{x}) + \sum_{i=1}^n \sum_{j=1}^n \frac{a_{ij}(\mathbf{x}) D_{i} u(\tau, \mathbf{x})}{x_1 + \ldots +x_n},
\end{equation}
\begin{equation}
\label{cauchypdeu0}
u(0,\cdot) = 1.
\end{equation}
where for $\tau := T-t$, the initial capitalization is $\mathcal{X}(t) = \mathbf{x}$. In general, this PDE has no explicit solution and the grid-based method is also hard to compute in high dimensions. 

This section seeks to develop the numerical solution to \eqref{cauchypde} - \eqref{cauchypde} under the volatility-stabilized market model (\cite{fernholz2005relative}). In Section~\ref{bealgo}, we give a numerical solution for optimal arbitrage opportunities in the volatility-stabilized market by simulating stocks from the Bessel bridges. We discuss the numerical performance for different interpolation schemes to obtain the capitalization process at terminal time $T$.

 \subsection{Challenges in finite difference methods}
The handling of \eqref{cauchypde} - \eqref{cauchypdeu0} with finite-difference methods presents several challenges. Firstly, the grid-based numerical schemes are notoriously expensive in terms of its computational costs, known as ``curse of dimensionality''. Secondly, $u(\tau,\mathbf{x})$ takes values from an unbounded domain of $(\tau,\mathbf{x}) \in \mathbb{R}_+ \times \mathbb{R}_+^n$ and the solutions of \eqref{cauchypde} - \eqref{cauchypdeu0} are not unique. Some artificial boundary conditions of $u(\tau,\mathbf{x})$ need to be carefully chosen for implementation. However, it is a delicate issue to select the correct minimal nonnegative solution, especially with a constant initial condition \eqref{cauchypdeu0} and additional boundary conditions. Thirdly, the capitalization takes values in the positive cone $(0, \infty)^n$, and thus it should not explode or go to zero. Direct simulation $\mathcal{X}(t)$ by discretizing \eqref{eq: x} with, for example, the Euler scheme produces $\mathcal{X}(t)$ values that are inevitably very close to zero. This causes a numerical overflow when approximating $u(T-t, \mathcal{X}(t))$ by \eqref{conduu}. In numerical experiments, we found that the finite-difference solution explodes or goes to zero quite easily even when the number of stocks is set as small as $n = 2$. 


\subsection{Volatility-stabilized market model (VSM)}
\label{numvsm}
The volatility stabilized market model (VSM), introduced in \cite{fernholz2005relative}, possesses similar characteristics in real markets, such as the \textit{leverage effect}, where instantaneous rates of return and volatility have a negative correlation with the capitalization of the stock relative to the market $\{\mathbf{m}_i(t)\}_{i = 1, \ldots, n}$ in \eqref{Mportfolio}. 

The market capital in volatility-stabilized model is modeled as a system of stochastic differential equation for $\mathcal X (\cdot) = (X_1(\cdot), \ldots , X_n (\cdot))$ defined by 
\begin{equation}
\label{kvsmeq}
    dX_i(t) = \kappa X(t)dt + \sqrt{X_i(t)X(t)} dW_i(t), \quad i = 1, \ldots, n,
\end{equation}
with $X(\cdot) = X_1(\cdot) + \cdots + X_n (\cdot)$,  where $n \geq 2$, $\kappa \in [\frac{1}{2}, 1]$.
This corresponds to $a_{ij} = \sqrt{X_i(t)X(t)} \delta_{ij}$ in \eqref{cauchypde}, where $\delta_{ij} = 1$ if $i=j$, and $\delta_{ij} = 0$ otherwise. Here, $W_1(\cdot), \ldots, W_n(\cdot)$ are independent standard Brownian motions. See \cite{fernholz2010optimal} for more details on the optimal arbitrage problem in volatility-stabilized models, and its relations to Bessel and Jacobi processes are discussed in \cite{goia2009bessel}.

We first define the squared Bessel process the Bessel process below.
\begin{definition}
For every $m \geq 0$ and $x \geq 0$, the unique strong solution of the equation 
\begin{equation}
\label{eq:besselQ}
    d Q_t = m dt + 2\sqrt{|Q_s|} dW_t, \quad t \ge 0 , \quad Q_0 = x,
\end{equation}
is called the square of $m$-dimensional Bessel process started at $x$. 
$W_t$ is a linear Brownian motion with quadratic variation $\langle W, W \rangle _t = t$, $t \ge 0$. 
\end{definition}
It is shown in \cite{goia2009bessel} that \eqref{eq:besselQ} has a unique strong solution for all $m \geq 0$ and $x \geq 0$. The $m$-dimensional Bessel process is the positive square root of $Q_\cdot$
\[
R_t := sgn(Q_t) \sqrt{|Q_t|}, \quad t \ge 0 , \quad  R_0 = sgn(x)\sqrt{|x|}.
\]
From It\^o's formula,
\[
dR(t) = \frac{m-1}{2R(t)} dt + dW(t) , \quad t \ge 0 .
\]

\subsection{Numerical solution of optimal arbitrage in VSM}
\label{bealgo}
In this section, we show a numerical solution to optimal arbitrage opportunities in a class of volatility-stabilized markets by simulating $\{\mathcal{X}(t)\}_{t \in [0,T]}$ through time-changed Bessel processes. 

Define a continuous, non-decreasing stochastic process
\begin{equation}
    \Lambda(t) := \int_0^t \frac{\mathcal{X}(s)}{4} \, ds, \quad t \geq 0.
\end{equation}
By It\^o's formula, the square root of the capitalization process can be written as
\[
d \sqrt{X_i(t)} =  \frac{m-1}{2\sqrt{X_i(t)}} d \Lambda(t) + d\widehat{W}_i(\Lambda(t)),
\]
where $m = 4 \kappa = 2(1+\zeta)$, $\widehat{W}_i(t) = \int_0^{\Lambda^{-1}(t)} \sqrt{\Lambda'(s)} \, dW_i(s)$ for $t, s \geq 0, \; i = 1, \ldots, n,$
and $\langle \hat{W}_i, \hat{W}_j \rangle (s) = s \delta_{ij}$.
Consider the processes $R_i(t) = \sqrt{X_i(\Lambda^{-1}(t))}$, $i=1, \ldots, n$.
Here, we can understand $\Lambda^{-1}(t)$ as a stochastic clock. The squared capitalization process under the time change through this clock gives the independent Bessel processes $R_i(\cdot)$. 
which follows
\[
dR_i(s) = \frac{m-1}{2R_i(s)} ds + d\widehat{W}_i(s) , \quad s \ge 0 .
\]


Given the starting time $s \in [0,T]$, the time-changed stock capitalization processes follow squared Bessel processes in $m$ dimension, 
\begin{equation}
\label{eq:yi}
    Y_i(t_k) = R_i^2(t_k) = X_i(\Lambda^{-1}(t_k)) = X_i(\Lambda^{-1}(t_0)) + \sum_{d = 1}^m \big(\sum_{\ell = 1}^k \Delta W_i^{(d)}(t_{\ell}) \big)^2
\end{equation}
with uniform mesh $t_k := s + k \Delta t$, $k \ge 1$ for $i = 1, \ldots, n$, where $\Delta t$ is the time step, $\Delta W_i^{(d)}(t_\ell) = W_i^{(d)}(t_{\ell+1}) - W_i^{(d)}(t_\ell)$ is the increment over $[t_\ell, t_{\ell+1}]$, implemented as $\sqrt{\Delta t} Z_i^{(d)}$ and $Z_i^{(d)}$ is a standard Gaussian random variable, for $i = 1, \ldots, n$, $d = 1, \ldots, m$.
The time-changed  process of the total capitalization of the stocks is $Y(\cdot) := \sum_{i=1}^n Y_i(\cdot)$ which follows 
\[
Y(t) = X_1(\Lambda^{-1}(t)) + \cdots + X_n(\Lambda^{-1}(t)) = 4 \Lambda'(\Lambda^{-1}(t)), \quad t \geq 0.
\]
Thus, the mapping of the clock $t_k \rightarrow \theta_k$ of the time change is
\[
\theta_k := \Lambda^{-1}(t_k) = \sum_{\ell=1}^k \frac{4}{ Y(t_{\ell})} \Delta t.\] 
With this, we can find the required range of uniform mesh such that $\theta_{N-1} \leq T \leq \theta_N$ for an appropriate $N$. 
We can then use the results of $Y_i(\cdot)$ and $Y(\cdot)$ to estimate \eqref{conduu} by refining the last segment $[\theta_{N-1}, \theta_{N}]$. We interpolate between $(\theta_{N-1}, X(\theta_{N-1}))$ and $(\theta_{N}, X(\theta_{N}))$. We next discuss how to choose a suitable interpolation scheme.

For a quick implementation, one can choose a linear interpolation as
\begin{equation}
\label{eq: linearinterp}
    \begin{aligned}
X_i(\theta) 
&= \frac{\theta_k - \theta}{\theta_k - \theta_{k-1}} Y_i(t_{k-1}) + \frac{\theta - \theta_{k-1}}{\theta_k - \theta_{k-1}} Y_i(t_{k})
\end{aligned}
\end{equation}
with $\theta = T$. Thus, we obtain the simulation of $X_i(T)$, and similarly, $X(T)$, from the interpolation. 
A more accurate interpolation can be obtained by simulating the Bessel bridge in the last time step between $(\theta_{N-1}, X(\theta_{N-1}))$ and $(\theta_{N}, X(\theta_{N}))$.
We construct Bessel bridges $R_b(t)$ from Brownian bridges (\cite{oksendal2013stochastic}).
That is, $R_b(t) = \left( \sum_{i=1}^m (H_t^i)^2 \right)^{\frac{1}{2}}$, 
where $H_t$ is the Brownian bridge from $a \in \mathbb{R}^n$ to $b\in \mathbb{R}^n$ over $[\theta_{k-1},\theta_{k}]$. 
Let the initial value $R_b(\theta_{k-1}) := \left(Y_i(t_{k-1})\right)^{\frac{1}{2}} > 0$. By It\^o's formula,
\begin{equation}
\label{bebridge}
\begin{aligned}
d R_b(t) 
&= \bigg(\frac{m-1}{2R_b(t)} - \frac{R_b(t)}{T-t} + \frac{z \sum_{i=1}^m Y^{(i)}_t}{R_b(t) (T-t)}\bigg) dt + dZ_t,
\end{aligned}
\end{equation}
where the end point is $R_b(\theta_k) = Y_i(t_{k})$. 
$Z_t$ is a standard Brownian motion.

Now, we can solve the optimal arbitrage which is the nonnegative minimal solution of \eqref{cauchypde}-\eqref{cauchypdeu0} through \eqref{conduu} and the time-changed Bessel processes. With $\kappa = 1$, \eqref{kvsmeq} reduces to the system of the stochastic differential equations 
\begin{equation}
\label{volstabstate}
    dX_i(t) = X(t)dt + \sqrt{X_i(t)X(t)} dW_i(t), \quad t \ge 0 
\end{equation}
for $i = 1, \ldots, n$. If the market follows \eqref{volstabstate}, the resulting solution $u(\cdot)$ is
\begin{equation}
\label{conduu}
u(T-t, \mathcal{X}(t)) = \frac{\mathbb{E}[L(T)X(T) | \mathcal{F}_t]}{L(t)X(t)}
= \frac{X_1(t) \cdot \cdots  \cdot X_n(t)}{X_1(t) + \cdots + X_n(t)}\mathbb{E}\bigg[ \frac{X_1(T) + \cdots + X_n(T)}{X_1(T) \ldots X_n(T)} \ \bigg| \ \mathcal{F}_t \bigg]
\end{equation}
for $t \ge 0 $. 
At each time $t$, $\mathcal{X}(t) = \mathbf{x}$ is given. Hence, at time step $t$, we rerun the simulation of $\{\mathcal{X}(t)\}_{t \in [0,T]}$ from the given state at $t$ to satisfy the Markov property and provides the correct conditional expectations for $u(\cdot)$ in \eqref{conduu}. We implement this by the Monte Carlo scheme with $N$ different realizations of the Brownian motion, and thus $N$ sample paths with initial point $\mathcal X(s) := \{X_1(s), \cdots, X_n(s)\}$ such that
\begin{equation}
\label{eq:mc-u}
    \begin{aligned}
u(T - s, \mathcal X(s)) &= \frac{X_1(s) \ldots X_n(s)}{X(s)} \frac{1}{N} \sum_{j=1}^N \frac{X^{(j)}(T)}{ X^{(j)}_1(T) \ldots X^{(j)}_n(T)},
\end{aligned}
\end{equation}
where $X^{(j)}_i(\cdot)$ is the $j-$th sample trajectory of the capitalization of stock $i$.

\subsection{Algorithm and experimental results}

We summarize the numerical algorithm in Algorithm~\ref{gmm}. The implementation of Algorithm~\ref{gmm} for numerical examples is carried out in Python.

\begin{algorithm}[h]
\caption{Solve $u$ by simulating Bessel processes in VSM}
\label{gmm}
\textbf{Input:} $n = $ the number of capitalization processes in the market, $m =$ the dimension of the time-changed Bessel processes to model $X_i(t)$ for $i = 1, \ldots, n$, $N_T = $ the number of uniform meshes on $[0,T]$, $n_p =$ sample trajectories used for Monte Carlo simulation of $u(\cdot)$.

For $s = 0$ to $N_T$:
\begin{enumerate}
 \item Initialize $k = 0$, the states $\mathcal{X}(s) := (x_1, \ldots, x_n)$, $\theta_0 = s$.
        \item While $\theta_k \leq T$:
        \begin{enumerate}
            \item Set $k \gets k+1$, $t_k := s + k \Delta t$.
            \item Generate $n_p$ samples of $m$-dimensional independent Brownian Motion $W(t_k)$.
            \item Simulate $n_p$ independent samples of $mn$-dimensional squared Bessel processes $Y(t_{k}) = \sum_{i=1}^n Y_i(t_{k}),$ where $Y_i(t_{k})$ is defined in \eqref{eq:yi} for $i = 1, \ldots, n$.
            \item Update $\theta_{k+1} = \theta_k + \frac{4}{Y(t_{k})} \Delta t$, where $\Delta t := T / N_T$.
        \end{enumerate}
        \item Evaluate $X_{i}(T)$ using interpolation techniques between the non-uniform mesh points $(\theta_k, \theta_{k+1})$. Then, compute $u(T-s, \mathcal{X}_s)$ by the Monte Carlo estimation \eqref{eq:mc-u}.

 \end{enumerate}
\textbf{Output:} The optimal arbitrage path $u(T-t, \mathcal{X}(t))$ for $t := s \Delta t$, $s = 0, 1, \ldots, N_T$.

\end{algorithm}

We consider the market models with a given number of stocks and time horizon. Figure~\ref{fig:uxcompare} shows the evaluated quantity of $u(T-t, \mathbf{x})$ along the $\mathbf{x}$ axes and the time $t$ axis, respectively.
The implementation of $\mathcal X(T)$ is done by the method specified in Algorithm~\ref{gmm} with the linear interpolation scheme \eqref{eq: linearinterp}.

We observe that small initial value of capitalization $\mathbf{x}$ can lead to numerical instabilities, as shown in the plots (c)-(d) of Figure~\ref{fig:uxcompare}. In particular, we visualize $u(T, \mathbf x)$ with $(x_1, x_2)$ taken from a mesh $[3.5, 9]^2$, discretized by 50 cells in each direction. The subfigure (c) in Figure~\ref{fig:uxcompare} takes the rest of the initial conditions $x_i = 4$ for $i=3, \ldots, 8$, and the subfigure (d) takes the rest of the initial conditions $x_i = 14$ for $i=3, \ldots, 8$.
One possible approach to solve this is the scaling property of the squared Bessel process, see \cite{carr2004bessel}, \cite{dufresne2004bessel}.

\begin{figure*}[ht]
  \centering
  \begin{tabular}{cccc}
    \includegraphics[width=.22\linewidth]{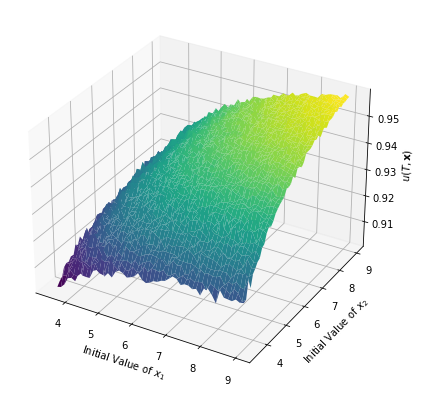}&
    \includegraphics[width=.22\linewidth]{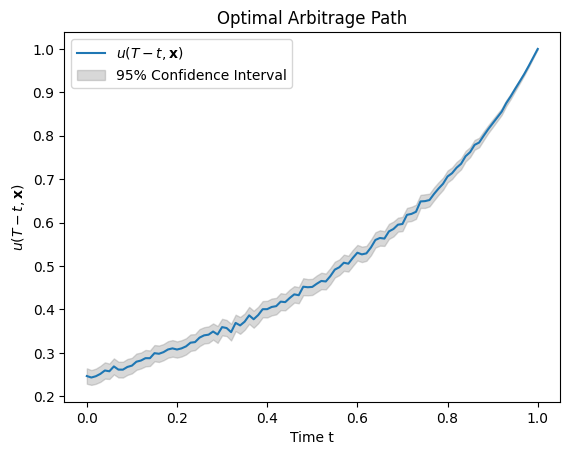}&
    \includegraphics[width=.22\linewidth]{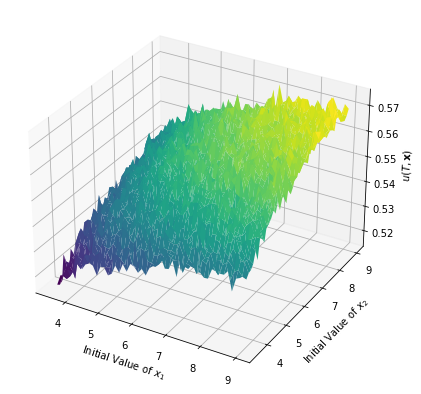}&
    \includegraphics[width=.22\linewidth]{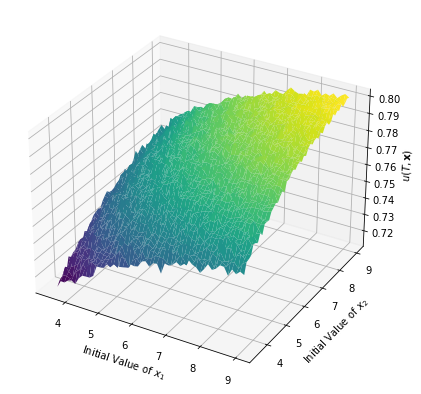}\\
    (a) & (b) & (c) &(d) \\
    \end{tabular}
  \caption{Fix $T = 1$, time step $\Delta t = 0.01$, $n_p=1000$ sample paths of Brownian motion is used to generate Bessel processes of dimension $m=4$. We solve the optimal arbitrage for the number of stocks $n=2$ in (a) - (b), and $n=8$ in (c) - (d). (a): Approximated $u(T, \mathbf{x})$ with $(x_1, x_2)$ taken from a mesh $[3.5, 9]^2$ with discretization of 50 cells in each direction. (b): Solution $u(T-t,\mathbf{x})$ for $t \in [0,T]$, with $(x_1, x_2) = (1,1)$. (c): Approximated $u(T, \mathbf x)$ with $(x_1, x_2)$ taken from a mesh $[3.5, 9]^2$ with discretization of 50 cells in each direction, $\{x_i\}_{i=3}^8$ is taken to be 4. (d): Approximated $u(T, \mathbf{x})$ with $(x_1, x_2)$ taken from a mesh $[3.5, 9]^2$ with discretization of 50 cells in each direction, $\{x_i\}_{i=3}^8$ is taken to be 14.}
  \label{fig:uxcompare}
\end{figure*}

From the interpolation by the Bessel bridge, we get the results in Figure~\ref{figuxbessel}. Higher-order numerical SDE solvers, such as the Milstein method, can be used to simulate the Bessel bridge process more accurately with numerical stability.
\begin{figure*}[ht]
  \centering
\begin{tabular}{ccc}
    \includegraphics[width=0.30\linewidth]{n-2_x0rand.png}& 
\includegraphics[width=0.30\linewidth]{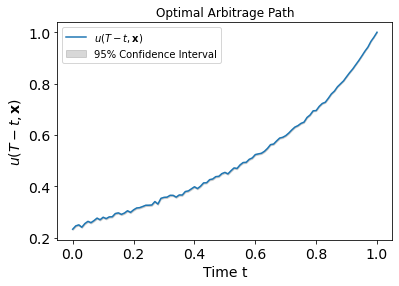}&
\includegraphics[width=0.32\linewidth]{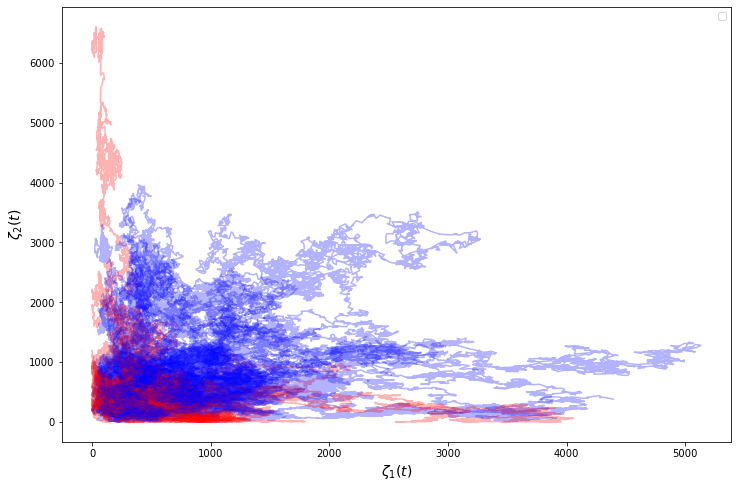}\\
(a) & (b) & (c) 
    \end{tabular}
    \caption{Consider $n=2$ stocks, time horizon $[0,T]$ with $T = 1$, and time step $\Delta t = 0.01$. We use $n_p=1000$ sample paths of Brownian motions to generate Bessel processes of dimension $m=4$. The interpolation with Bessel bridges is approximated through Euler-Maruyama of \eqref{bebridge} with time increment $0.0001$. (a): Approximated solution $u(T, \mathbf{x})$ with $(x_1, x_2)$ taken from a mesh $[3.5, 9]^2$ with spatial discretization of 50 cells in each direction. (b): Approximated solution $u(T-t,\mathbf{x})$ for $t \in [0,T]$, with $(x_1, x_2) = (1,1)$. The comparison of computing with (left) or without (right) the initial condition $u(0, \mathbf{x}) = 1$.  (c): Sample trajectories of the $n$ dimensional auxiliary process $(\zeta_1(\cdot), \ldots, \zeta_n(\cdot))$ over time horizon $[0,1]$. The sample trajectories that touch the boundary $\{\zeta_i = 0\}_{i=1}^n$ is marked red. The trajectories that stay positive is marked blue.}
    \label{figuxbessel}
\end{figure*}

To this end, we consider the existence of non-trivial relative arbitrage. In this paper, this means that the relative arbitrage should be strictly less than one. Otherwise, an investor will replicate exactly the market portfolio and match the market capitalization with $u(\cdot, \mathbf{x}) \equiv 1$.
It is shown in \cite{fernholz2010optimal} and \cite{ichiba2020relative} that having a non-trivial relative arbitrage amounts to showing that the auxiliary process
\[
d\zeta_i(t) = \zeta_i(t) \, dt + \sqrt{\zeta_i(t) \left( \zeta_1(t)  + \dots + \zeta_n(t) \right)} \, dW_i(t), \quad t \ge 0 
\]
may hit the boundary of the domain $[0, \infty)^n$ in finite time. We implement the $n$ dimensional auxiliary process $(\zeta_1(\cdot), \ldots, \zeta_n(\cdot))$ and visualize the trajectories in Figure~\ref{figuxbessel}(c), which shows that the auxiliary processes touch the boundary of the domain $[0, \infty)^n$ with positive probability.

\begin{remark}
    For a more general class of volatility-stabilized markets, let us consider $0 \leq \zeta \leq 1$.
For $\mathcal{X}(t) = \mathbf{x} := (x_1, \ldots, x_n)$, the optimal arbitrage can be derived from the PDE characterization in \cite{fernholz2010optimal} that
     \[
     \begin{aligned}
     & u(T-t, \mathbf{x}) \\
     =  & \frac{(x_1 \cdots x_n)^{\frac{1+\zeta}{2}}}{x_1 + \ldots x_n} \mathbb{E}^{\mathbf{x}}\left[ \frac{X_1(T) + \ldots + X_n(T)}{(X_1(T) \cdots X_n(T))^{\frac{1+\zeta}{2}}} e^{-\int_t^T (1- \zeta^2)(X_1(s) + \ldots + X_n(s)) \sum_{j=1}^n \frac{1}{8 X_j(s)} ds} \right].
     \end{aligned}
     \]
This can be implemented similarly using the numerical approach in this section. In the next section, we propose a more general approach to deal with the PDE characterization \eqref{cauchypde}-\eqref{cauchypdeu0} of the optimal arbitrage.
\end{remark}

\section{A general approach via reflected Backward Stochastic Differential Equations}
\label{sec: rbsde}

The nonnegative minimal solution of PDEs is not only of interest in optimal arbitrage problem, as we have introduced in Section~\ref{sec: intro}. In this section, we discuss a different approach based on reflected Backward Stochastic Differential Equations (rBSDEs) to solve the minimal solution of \eqref{cauchypde}-\eqref{cauchypdeu0} that is not limited to volatility-stabilized market. 

Consider the PDE below
  \[
  \min [ u(\tau,\mathbf{x}) - h(\tau,\mathbf{x}), \partial_{\tau} u - \mathcal{A} u(\tau,\mathbf{x})] = 0, \quad (\tau,\mathbf{x}) \in (0,T) \times \mathbb{R}^n.
  \]
  \[
  u(0, \mathbf{x}) = g(\mathbf{x}), \quad \mathbf{x} \in \mathbb{R}^n.
  \]
We can write this equivalently as
  \begin{equation}
  \label{constraintreflect}
        \begin{aligned}
    0 &= \partial_\tau u - \mathcal{A}u(\tau, \mathbf{x}),   & \quad \{(\tau, \mathbf{x}) : u(\tau, \mathbf{x}) > h(t,\mathbf{x})\}, \\
    u(\tau, \mathbf{x}) & \geq h(\tau, \mathbf{x}), & \quad (\tau, \mathbf{x}) : (0,T) \times \mathbb{R}^n, \\
    u(\tau, \mathbf{x}) & \in C^{1,2}([0,T] \times \mathbb{R}^n), &\quad \{(\tau, \mathbf{x}) : u(\tau, \mathbf{x}) = h(\tau, \mathbf{x})\},\\
    u(0, \mathbf{x}) & = g(\mathbf{x}), & \quad \mathbf{x} \in \mathbb{R}^n.
  \end{aligned}
  \end{equation}
We denote by $u^r$ a solution to \eqref{constraintreflect}. In contrast, the non-negative solution of a parabolic PDE of our interest is 
\begin{equation}
\label{constraintcauchy}
      \begin{aligned}
    0 &= \partial_\tau u - \mathcal{A}u(\tau, \mathbf{x}),   & \quad (\tau, \mathbf{x}) : (0,T) \times \mathbb{R}^n, \\
    u(\tau, \mathbf{x}) & \geq h(\tau, \mathbf{x}), & \quad (\tau, \mathbf{x}) : (0,T) \times \mathbb{R}^n, \\
    u(\tau, \mathbf{x}) & \in C^{1,2}([0,T] \times \mathbb{R}^n), &\quad (\tau, \mathbf{x}) : (0,T) \times \mathbb{R}^n,\\
    u(0, \mathbf{x}) & = g(\mathbf{x}), & \quad \mathbf{x} \in \mathbb{R}^n.
  \end{aligned}
\end{equation}


Let $\mathcal{D}_1$ be the domain of the optimization problem $\min u^r(\tau, \mathbf{x})$, subject to \eqref{constraintreflect}, $\mathcal{D}_2$ be the domain of optimization problem $\min u(\tau, \mathbf{x})$, subject to \eqref{constraintcauchy}. We see that $\mathcal{D}_1 \subset \mathcal{D}_2$. Denote $u^{r \star}(\tau, \mathbf{x}) := \min u^r(\tau, \mathbf{x})$ and $u^{\star}(\tau, \mathbf{x}) := \min u(\tau, \mathbf{x})$. If both problems are feasible, then $u^{r \star}(\tau, \mathbf{x}) \le u^{\star}(\tau, \mathbf{x})$.

\begin{prop}
\label{prop832}
Suppose $u \in C^{1,2}([0,T] \times \mathbb{R}^n)$ is a solution of \eqref{cauchypde}. Define $\mathcal{X}(t) := (X_1(t), \ldots, X_n(t))$, for each $X_i(t)$, $i = 1, \ldots, n$,
\[
dX_i(t) = b_i(t) dt + \sum_{k=1}^n s_{ik}(t) dW_k(t), \quad t \ge 0 . 
\] 
Then $\{\mathcal{F}_t\}_{t \in [0,T]}$-adapted processes $\{\mathcal{X}_r, Y_r, Z_r\} := \{\mathcal{X}_r, u(T-r,\mathcal{X}_r), (s\nabla u ) (T-r,\mathcal{X}_r)\}$ solves 
\begin{equation}
\label{bsdeopt}
    u(T-t, \mathcal{X}_t) = u(0)-\int_t^T f(\mathcal{X}_r, Y_r, Z_r) dr - \int_t^T (Z_r)' dW(r),
\end{equation}
where \[
    \begin{aligned}
f(\mathcal{X}_t, Y_t, Z_t) 
 & =  b(\mathbf{x}) (s'(t, \mathcal{X}_t))^{-1} Z_t - \frac{1}{x_1 + \ldots +x_n} \mathbf{1}' s(t, \mathcal{X}_t) Z_t.
\end{aligned}
\]
\end{prop}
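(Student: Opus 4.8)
The plan is a direct It\^o--Feynman--Kac verification: differentiate the candidate $u(T-\cdot,\mathcal{X}_\cdot)$ along the diffusion, use the PDE to cancel the second-order terms, and read off the drift and the martingale integrand, which will be exactly $f$ and $Z$.

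First I would fix $(t,\mathbf{x})$, abbreviate $\mathcal{X}_r := \mathcal{X}_r^{t,x}$ with $\mathcal{X}_t=\mathbf{x}$, and apply It\^o's formula to $r\mapsto u(T-r,\mathcal{X}_r)$ on $[t,T]$, which is legitimate since $u\in C^{1,2}$. With $\tau = T-r$ the time variable contributes $-\partial_\tau u(T-r,\mathcal{X}_r)\,dr$, the gradient contributes $\sum_i D_i u(T-r,\mathcal{X}_r)\,dX_i(r)$, and the Hessian contributes $\frac{1}{2}\sum_{i,j} D_{ij}^2 u(T-r,\mathcal{X}_r)\,d\langle X_i,X_j\rangle_r$. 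From $dX_i(r) = b_i(r)\,dr + \sum_k s_{ik}(r)\,dW_k(r)$ and the relation $a_{ij} = \sum_k s_{ik}s_{jk}$ of Assumption~\ref{asmp1:main} one has $d\langle X_i,X_j\rangle_r = a_{ij}(\mathcal{X}_r)\,dr$, so $u(T-r,\mathcal{X}_r)$ has drift
\[
-\partial_\tau u(T-r,\mathcal{X}_r) + \sum_i b_i(r) D_i u(T-r,\mathcal{X}_r) + \frac{1}{2}\sum_{i,j} a_{ij}(\mathcal{X}_r) D_{ij}^2 u(T-r,\mathcal{X}_r)
\]
and martingale part $\sum_k\bigl(\sum_i s_{ik}(r) D_i u(T-r,\mathcal{X}_r)\bigr)\,dW_k(r)$; the vector of these integrands is $Z_r := (s\nabla u)(T-r,\mathcal{X}_r)$ once $s$ is taken symmetric, as in the volatility-stabilized case \eqref{kvsmeq}.

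Next I would invoke the Cauchy PDE \eqref{cauchypde}, which says precisely $-\partial_\tau u + \frac{1}{2}\sum_{i,j} a_{ij} D_{ij}^2 u = -\bigl(\sum_{i,j} a_{ij} D_i u\bigr)/(x_1+\cdots+x_n)$, so the drift collapses to $\sum_i b_i D_i u - (x_1+\cdots+x_n)^{-1}\sum_{i,j} a_{ij} D_i u$. Since $s(\cdot)\in\mathrm{GL}(n)$ I can write $\nabla u = (s^{T})^{-1}Z_r$; then $\sum_i b_i D_i u = b\,(s^{T})^{-1}Z_r$ (with $b$ read as a row vector), the first term of $f$, while $\sum_j a_{ij} = (ss^{T}\mathbf{1})_i$ gives $\sum_{i,j} a_{ij} D_i u = (\nabla u)^{T} s s^{T}\mathbf{1} = Z_r^{T} s^{T}\mathbf{1} = \mathbf{1}'\,s\,Z_r$, the second term of $f$ (note $f$ is in fact independent of its $Y$-slot here, consistent with \eqref{cauchypde} having no zeroth-order term). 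Hence $Y_r := u(T-r,\mathcal{X}_r)$ satisfies $dY_r = f(\mathcal{X}_r,Y_r,Z_r)\,dr + (Z_r)'\,dW(r)$; integrating over $[t,T]$ and using \eqref{cauchypdeu0}, which forces $Y_T = u(0,\mathcal{X}_T) = 1$ (written $u(0)$ in the statement, independent of $\mathcal{X}_T$), yields \eqref{bsdeopt}.

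The one step that is more than bookkeeping is the verification of admissibility: to make \eqref{bsdeopt} meaningful --- and the stochastic integral a true martingale, if that is wanted --- one needs $\int_t^T |Z_r|^2\,dr<\infty$ and $\int_t^T |f(\mathcal{X}_r,Y_r,Z_r)|\,dr<\infty$ almost surely (respectively in expectation). I would obtain this from the continuity of $u$ and $\nabla u$ on the a.s.\ bounded range of $\mathcal{X}$ over $[t,T]$, together with the structural integrability hypotheses \eqref{xcond} and \eqref{assum}; in the volatility-stabilized setting the explicit time-changed squared-Bessel representation of $\mathcal{X}$ from Section~\ref{bealgo} provides the required moment estimates directly. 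This is where the real work lies, though it is routine rather than deep.
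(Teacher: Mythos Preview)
Your argument is correct and follows the same route as the paper: apply It\^o's formula to $u(T-r,\mathcal{X}_r)$, use the Cauchy PDE \eqref{cauchypde} to replace $-\partial_\tau u + \tfrac12\sum a_{ij}D^2_{ij}u$ by the first-order term, and integrate over $[t,T]$ with the terminal condition \eqref{cauchypdeu0}. Your write-up is in fact slightly more thorough than the paper's, since you explicitly perform the substitution $\nabla u = (s^{T})^{-1}Z$ to recover the stated form of $f$ (noting the symmetry of $s$ needed for $Z=s\nabla u$) and address the integrability required for \eqref{bsdeopt}, neither of which the paper's proof spells out.
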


\begin{proof}
By It\^o's formula on $u(\tau, \mathbf{x})$, $\tau:= T-t$, it follows
\[
du(T-t, \mathcal{X}(t)) = (\mathcal{L}u - \frac{\partial u}{\partial \tau})(T-t, \mathcal{X}(t)) dt + \sum_{k=1}^n R_k(T-t, \mathcal{X}(t)) dW_k(t),
\]
where 
$R_k(\tau, \mathbf{x}) = \sum_{i=1}^n x_i s_{ik}(\mathbf{x}) D_{i} u(\tau, \mathbf{x})$, $k = 1, \ldots, n$. $\mathcal{L}$ is the infinitesimal generator for $\mathcal X(\cdot)$, i.e.,  
\[
    \begin{aligned}
\mathcal{L} u(\tau, \mathbf x) = & \, \sum_{i=1}^n \sum_{j=1}^n \Big( b(\mathbf{x}) \cdot D_i u(\tau, \mathbf x) + \frac{1}{2} \text{tr} \big[a (\mathbf{x}) \cdot D^2_{ij} u(\tau, \mathbf x) \big] \Big) \Big \vert_{\mathbf x = \mathcal X(t) }.
\end{aligned}
\]
Substitute $\frac{\partial u}{\partial \tau}(\tau, \mathbf{x})$ with the Cauchy problem \eqref{cauchypde}, we get that the minimal non-negative continuous solution of the above equation satisfies $u(0) = 1$,
\[
du(T-t, \mathcal{X}(t)) = f(\mathcal{X}(t), u, Du) dt + \sum_{k=1}^n R_k(T-t, \mathcal{X}(t)) dW_k(t),
\]
where $f(\mathbf{x}, u, Du) = \sum_{i=1}^n b_i(\mathbf{x}) D_{i} u(\tau, \mathbf{x}) - \sum_{i=1}^n \sum_{j=1}^n \frac{a_{ij}(\mathbf{x}) D_{i} u(\tau, \mathbf{x})}{x_1 + \ldots +x_n}$. 
For $\forall t \leq T$, integrate $du(T-r, \mathcal{X}(r))$ with respect to $r$, where $r \in [t,T]$,
we get that $\{\mathcal{X}_r, u(r,\mathcal{X}_r), (s \nabla u)(r,\mathcal{X}_r) \}$ solves \eqref{bsdeopt}.
\end{proof}

Now, we can approach the minimal nonnegative solution of the Cauchy PDEs in \cite{fernholz2010optimal}, \cite{ichiba2020relative}, \cite{yang2023relative} by considering the following obstacle problem,
\begin{equation}
\label{refbsde}
    Y_{t} = g(\mathcal{X}_T) + \int_t^T f(\mathcal{X}_s,Y_s,Z_s)ds - \int_t^T Z_s dW_s + K_T - K_t, t \leq s \leq T,
\end{equation}
where $S_t = h(\tau, \mathcal{X}_t)$, $\tau = T-t$ is a continuous obstacle process that satisfies $\mathbb{E} [\sup_{0 \leq t \leq T} S_t^2] < \infty$ and $S_T$ is bounded almost surely.
$\{K_t\}$ is continuous nondecreasing predictable process, such that $K_0 = 0$, $\int_0^T (Y_t - S_t) dK_t = 0.$
This acts as a minimal push, as the push occurs only
when the constraint is reached $Y_t = S_t$. Given the obstacle process $S_t$, the viscosity solution of the obstacle problem \eqref{constraintreflect} is shown in \cite{el1997reflected} to be equivalent to the reflected BSDE \eqref{refbsde}. 

We can take $S_t = 0$ and use the penalization method in \cite{el1997reflected} to approximate the minimal solution of \eqref{refbsde}, $(Y, Z, K)$, in the sense that for any other solution $(\Tilde{Y}, \Tilde{Z}, \Tilde{K})$, $Y \leq \Tilde{Y}$. 

\section{Discussion}

Numerical approaches to high-dimensional PDEs typically requires the PDEs to have a unique solution. However, the existence of multiple solutions of the PDEs in Stochastic Portfolio Theory literature is essential, and solving the nonnegative minimal solution receives particular interests. In this paper, we investigate a probabilistic approach to solve the nonnegative minimal solution of a Cauchy PDE in the volatility-stabilized market. By formulating the stock capitalizations as time-changed squared Bessel processes, we solve the optimal arbitrage quantity numerically and demonstrate its regularity in time and space empirically. The extension of this approach to a more general market setup (\cite{ichiba2020relative, yang2023relative}) as an interacting particle system is related to numerical methods for stochastic differential games. Here, a possible next step is to solve the associated obstacle problem in the vein of Section~\ref{sec: rbsde} based on the corresponding forward-backward PDEs (\cite{achdou2020mean}). 

\bibliographystyle{abbrv}
\bibliography{main}

\end{document}